\DeclareTextSymbolDefault{\textbackslash}{T1}
\DeclareSymbolFont{letters}{OML}{txmi}{m}{it}
\DeclareSymbolFont{operators}{OT1}{cmr}{m}{n}
\DeclareSymbolFont{letters}{OML}{cmm}{m}{it}
\DeclareSymbolFont{symbols}{OMS}{cmsy}{m}{n}
\DeclareSymbolFont{largesymbols}{OMX}{cmex}{m}{n}
\newcommand{\MetaSOS}{{\textsf{Meta~SOS}}}
\newcommand{\PREG}{{\textsf{PREG~Axiomatizer}}}
\newcommand{\codefont}{\fontsize{10}{10}\selectfont}
\newcommand{\code}[1]{{\tt \codefont {#1}}}
\newcommand{\builddate}{\ifnumcomp{\year}{<}{10}{0}{}\the\year-\ifnumcomp{\month}{<}{10}{0}{}\the\month-\ifnumcomp{\day}{<}{10}{0}{}\the\day\ \currenttime}
\ifnumcomp{\svnhour}{=}{23}{\def\svnhour{01}}{}
\ifnumcomp{\svnhour}{=}{22}{\def\svnhour{00}}{}
\ifnumcomp{\svnhour}{=}{21}{\def\svnhour{23}}{}
\ifnumcomp{\svnhour}{=}{20}{\def\svnhour{22}}{}
\ifnumcomp{\svnhour}{=}{19}{\def\svnhour{21}}{}
\ifnumcomp{\svnhour}{=}{18}{\def\svnhour{20}}{}
\ifnumcomp{\svnhour}{=}{17}{\def\svnhour{19}}{}
\ifnumcomp{\svnhour}{=}{16}{\def\svnhour{18}}{}
\ifnumcomp{\svnhour}{=}{15}{\def\svnhour{17}}{}
\ifnumcomp{\svnhour}{=}{14}{\def\svnhour{16}}{}
\ifnumcomp{\svnhour}{=}{13}{\def\svnhour{15}}{}
\ifnumcomp{\svnhour}{=}{12}{\def\svnhour{14}}{}
\ifnumcomp{\svnhour}{=}{11}{\def\svnhour{13}}{}
\ifnumcomp{\svnhour}{=}{10}{\def\svnhour{12}}{}
\ifnumcomp{\svnhour}{=}{09}{\def\svnhour{11}}{}
\ifnumcomp{\svnhour}{=}{08}{\def\svnhour{10}}{}
\ifnumcomp{\svnhour}{=}{07}{\def\svnhour{09}}{}
\ifnumcomp{\svnhour}{=}{06}{\def\svnhour{08}}{}
\ifnumcomp{\svnhour}{=}{05}{\def\svnhour{07}}{}
\ifnumcomp{\svnhour}{=}{04}{\def\svnhour{06}}{}
\ifnumcomp{\svnhour}{=}{03}{\def\svnhour{05}}{}
\ifnumcomp{\svnhour}{=}{02}{\def\svnhour{04}}{}
\ifnumcomp{\svnhour}{=}{01}{\def\svnhour{03}}{}
\ifnumcomp{\svnhour}{=}{00}{\def\svnhour{02}}{}
\newtheorem{defn}{Definition}
\newtheorem{thm}[defn]{Theorem}
\newcommand{\commform}{\textsf{comm-form}}
\newcommand{\COMM}{\id{COMM}}
\newcommand{\simcom}{\mathrel{\sim_{cc}}}
\newcommand{\sosrule}[2]{\frac{\raisebox{.7ex}{\normalsize{$#1$}}}
                        {\raisebox{-1.0ex}{\normalsize{$#2$}}}}
\newcommand{\trans}[1]{\,{\xrightarrow{{#1}}}\,}
\newcommand{\ntrans}[1]{\,{\stackrel{{#1}}{\nrightarrow}}\,}
\newcommand{\bisim}{\;\underline{\hspace*{-0.15ex}
                        \leftrightarrow\hspace*{-0.15ex}}\;}
\newcommand{\bisimpar}[1]{\;\underline{\hspace*{-0.15ex}
                        \leftrightarrow\hspace*{-0.15ex}}^{{#1}}}
\newcommand{\nul}{{\ensuremath{\mathbf{0}}}}
\newcommand{\DedRule}[1]{\mbox{{\bf\small (#1)}}}
\newcommand{\id}[1]{\mathit{#1}}
\def\lparal{\mathbin{\setbox0=\hbox{$\|$}%
        \dimen0=\dp0 \advance\dimen0 -1.5pt \dp0=\dimen0%
        \underline{\kern-1.5pt\box0\kern1.5pt}}}
\def\lparal{\mathbin{\setbox0=\hbox{$\|$}%
        \dimen0=\dp0 \advance\dimen0 -1.5pt \dp0=\dimen0%
                \underline{\kern-1.5pt\box0\kern1.5pt}}}
\newcommand{\bbn}{|\mkern-2mu[}
\newlength{\chisymbolwidth}
\definecolor{lightblue}{RGB}{224,224,255}
\definecolor{lightred}{RGB}{255,224,224}
\definecolor{lightgreen}{RGB}{224,255,224}
\definecolor{lightyellow}{RGB}{255,255,224}
\definecolor{lightpurple}{RGB}{255,224,255}
\definecolor{darkerred}{RGB}{64,0,0}
\definecolor{darkred}{RGB}{128,0,0}
\definecolor{darkblue}{RGB}{0,0,128}
\definecolor{darkgreen}{RGB}{0,128,0}
\definecolor{darkpurple}{RGB}{128,0,128}
\def\THICKhrulefill{\leavevmode \leaders \hrule height 5pt\hfill \kern \z@}
\begin{document}

\title{Meta SOS -- A Maude Based SOS Meta-Theory Framework\thanks{The authors have been partially supported by the
 project `Meta-theory of Algebraic Process Theories' (nr.~100014021)
 of the Icelandic Research Fund.  Eugen-Ioan Goriac is also funded by
 the project `Extending and Axiomatizing Structural Operational
 Semantics: Theory and Tools' (nr.~1102940061) of the Icelandic
 Research Fund. The authors want to thank Alberto~Lluch~Lafuente for his comments on an earlier version of the paper. }}

\author {
Luca Aceto${^{1,2}}$ \qquad
Eugen-Ioan Goriac${^{1}}$ \qquad
Anna Ingolfsdottir${^{1}}$
\email{
[luca,egoriac10,annai]@ru.is
}
\institute {
${^{1}}$ ~~ ICE-TCS, School of Computer Science, Reykjavik University, Iceland\\
${^{2}}$ ~~ SysMA, IMT Lucca Institute for Advanced Studies, Lucca 55100, Italy
}
}

\def\authorrunning{Aceto, Goriac \& Ingolfsdottir}
\def\titlerunning{{\MetaSOS} -- A Maude Based SOS Meta-Theory Framework}

\maketitle

\begin{abstract}
{\MetaSOS} is a software framework designed to integrate the results from the meta-theory of structural operational semantics (SOS).
These results include deriving semantic properties of language constructs just by syntactically analyzing their rule-based definition, as well as automatically deriving sound and ground-complete axiomatizations for languages, when considering a notion of behavioural equivalence.
This paper describes the {\MetaSOS} framework by blending aspects from the meta-theory of SOS, details on their implementation in Maude, and running examples.
\end{abstract}

\section{Introduction}\label{Sect:intro}

Structural Operational Semantics \cite{Plotkin81} is a well known approach for intuitively specifying the semantics of programming and specification languages by means of rules.
These rules can be analyzed using meta-theoretic results in order to infer certain properties about language constructs by purely syntactic means.
Research on SOS meta-theory has at its core the development of rule formats that, if respected, will guarantee that some language constructs have certain properties, such as commutativity, associativity, and idempotence.
We refer the reader to \cite{Mousavi09-BEATCS} for an overview on how to derive these properties as well as axiomatizations.
Rule formats can also be used to obtain congruence properties for behavioural equivalences (see, e.g., \cite{Aceto01}) and semantic properties such as determinism of transition relations \cite{AcetoBIMR12}.

Despite the large body of research on the meta-theory of SOS, to the best of our knowledge, there currently does not exist an extensible software tool integrating the results obtained so far in that research area. (We briefly review some of the existing software tools below.) This is an unsatisfactory state of affairs since such a software framework would allow language designers to benefit from the results in the meta-theory of SOS while experimenting with their language designs. The design of programming and specification  languages is a highly non-trivial endeavour and tool support is needed in order to support prototyping of language designs, their algorithmic analysis and early checking of desired semantic properties. The meta-theory of SOS provides, for example,  syntactic criteria guaranteeing the validity of semantic properties, but checking such criteria by hand is error prone and quickly becomes infeasible.

In this paper we introduce {\MetaSOS}{\footnote{The framework is downloadable from \url{http://goriac.info/tools/meta-sos/}.}}, a framework for handling SOS specifications, with the purpose of performing simulations, deriving axiomatizations, and checking for rule formats.
Though it has a different line of implementation, {\MetaSOS} continues the work we started with a prototype named {\PREG} \cite{AcetoCGI11a}, dedicated to deriving axiom systems from SOS specifications.

We are aware of other software tools that are somewhat related to {\MetaSOS}. In \cite{sos-maude} the authors show how to prototype SOS meta-theory in Maude \cite{DBLP:conf/maude/2007}.
That paper was a good point of reference for us both for implementation details and future work ideas.
The Process Algebra Manipulator (PAM) {\cite{DBLP:journals/fmsd/Lin95}} is designed to perform algebraic reasonings on programs written only in CCS \cite{Milner89a}, CSP \cite{Hoare85} and LOTOS \cite{Brinksma-1985a}.
PAM does not allow the user to define their own language.
The Maude MSOS Tool (MMT) \cite{MaudeMSOS} does provide this facility, however it does not focus on axiomatizations or rule formats, and, unfortunately, neither does it facilitate
a natural extension with new features. LETOS \cite{so55686} is a lightweight tool to aid the development of operational semantics, which supports execution and tracing, as well as quality rendering using \LaTeX. LETOS makes some first steps towards checking operational conservativity along the lines proposed in the paper \cite{GrooteV92}.

The rest of the paper is organized as follows.
In Section~\ref{sec:pre} we present some preliminaries on SOS, Maude and {\MetaSOS}. Section~{\ref{sec:components}} describes the three components the framework currently provides: a simulator and bisimilarity checker (Section~\ref{sec:simbis}), a sound and ground-complete axiom schema deriver (Section~\ref{sec:axiomderiver}), and a commutativity format checker (Section~\ref{sec:formatchecker}).
It also includes Section~\ref{sec:linda}, where we present a case study that integrates all the previously mentioned components, and Section~\ref{sec:addingcomponents}, where we briefly show how to extend the framework with more functionalities.
Finally, Section~\ref{sec:conc} concludes the paper and points out possible directions for future research.

\section{Preliminaries}\label{sec:pre}

Maude \cite{DBLP:conf/maude/2007} is a high-level language providing support for specifying multi-sorted signatures, equational and rewrite theories.
Not only is it an excellent environment to perform reasonings with these theories at object level, but also, due to its reflective capabilities, to analyze and operate with them at meta-level.
Previous efforts \cite{AcetoCGI11a, MaudeMSOS, sos-maude, SerbanutaRM09, Verdejo2006226} have shown its suitability in facilitating SOS specifications.

{\MetaSOS} is implemented in Maude as a metalanguage application \cite{GoriacCLAG09}. This means that the framework extends Maude with capabilities such as providing SOS specifications and operating with them.
After opening the Maude environment and loading the framework by using the command \code{load metasos.maude}, a specification is given using the standard syntax for inputting functional modules: \code{(fmod SPECIFICATION is~{\ldots}~endfm)}, where ``\code{\ldots}'' consists of constructs that are discussed in the remainder of the paper.

We assume a \emph{signature} $\Sigma$, which is a set of function symbols with fixed arities (typical members: $f$, $g$). Function symbols with arity 0 are referred to as \emph{constants}.
Moreover, we assume an infinite set of \emph{variables} $V$ (typical members: $x, y$).

\emph{Open terms} are inductively built using variables and function symbols by respecting their arities. The set of open terms is denoted by ${\mathbb{T}}(\Sigma)$ (typical members: $s, t$). By $T(\Sigma)$ we denote the set of terms formed without variables, referred to as \emph{closed terms} (typical members: $p,q$).
\emph{Substitutions}, which are functions of the type $\sigma:V \to {\mathbb{T}}(\Sigma)$, have the role of replacing variables in an open term with other (possibly open) terms.

{\MetaSOS} has a basic set of sorts. One of them represents the domain of process terms ${\mathbb{T}}(\Sigma)$ and has the name \code{PTerm}.
It is important to note that we did not use the name \code{Term} due to it being reserved for operating at meta-level with general terms formed using Maude multi-sorted signatures.
In order to have access to the sort \code{PTerm} one needs to include a core {\MetaSOS} module named \code{RULES} in the specification: \code{including~RULES~.}
Operations are given using a standard syntax.
For instance, the following construct declares a binary operation $f$ over process terms: \code{op f~:~PTerm PTerm -> PTerm [metadata~"sos"]~.}
Notice the use of the attribute in square brackets, which makes it possible for $f$ to be used in SOS specifications. Variables are also given using a standard syntax: \code{var~x~y~:~PTerm~.}

\subsection{Transition System Specifications in {\MetaSOS}}
\label{subsec:tssmetasos}

We will now describe how transition system specifications are expressed in {\MetaSOS}.

\begin{defn}[Transition System Specification]
Consider a signature $\Sigma$ and a set of labels $L$ (with typical members $l, l'$), $t, t' \in {\mathbb{T}}(\Sigma)$ and $l \in L$.
A \emph{positive transition formula} is a triple $(t,l,t')$, written $t \trans{l} t'$, with the intended meaning: process $t$ performs the action labelled as $l$ and becomes process $t'$. A \emph{negative transition formula} is a tuple $(t,l)$, written $t \ntrans{l}$, with the meaning that process $t$ cannot perform the action labelled as $l$.

A \emph{transition rule} is a pair $(H,\alpha)$, where $H$ is a set of formulae and $\alpha$ is a formula. The formulae from $H$ are called \emph{premises} and the formula $\alpha$ is called the \emph{conclusion}.
A transition rule is often denoted by $\sosrule{H}{\alpha}$ and has the following generic shape:
\[
\sosrule{\{t_i \trans{l_{i}} t'_{i}  \mid  i \in I\} \cup \{t_{j} \ntrans{l_{j}} \mid j \in J \}}{t \trans{l} t'}
,\]
where $I, J$ are index sets, $t, t',t_{i}, t'_{i}, t_{j} \in {\mathbb{T}}(\Sigma)$, and $l_{i}, l_{j} \in L$.
A \emph{transition system specification} (abbreviated TSS) is a triple $(\Sigma, L,
{\cal R})$ where $\Sigma$ is a signature, $L$ is a set of labels, and ${\cal R}$ is a set of transition rules of the provided shape.

\end{defn}

In {\MetaSOS}, positive and negative formulae are denoted by expressions such as \code{t~-(l)->~t'} and  \code{t~-(l)/>}, respectively. Here \code{t}, \code{t'} are variables of sort \code{PTerm} and \code{l} is a variable of another provided sort, \code{PLabel}.
A transition rule $\sosrule{H}{c}$ is declared as \code{H === c}, where \code{H} consists of a (possibly empty) list of comma-separated formulae.
The entire set of transition rules is given as a list of rules wrapped in a Maude membership axiom declaration: \code{mb~\ldots~:~PAllRules~.}

\newcommand{\BCCSP}{{\textnormal{BCCSP}}}

To exemplify a full {\MetaSOS} specification, consider the {\BCCSP} system from \cite{Glabbeek01}. Its signature $\Sigma_{\BCCSP}$ includes the \emph{deadlock} process $\nul$, a collection of \emph{prefix} operators $l.\_$ ($l \in L$) and the binary \emph{choice} operator $\_+\!\_$\,. For a fixed $L = \{a,b,c\}$, the deduction rules for these operators are:
\[
\sosrule{}{l.x \trans{l} x} \hspace{3ex} \sosrule{x \trans{l} x'}{x + y \trans{l} x'} \hspace{3ex} \sosrule{y \trans{l} y'}{x + y \trans{l} y'} \textnormal{~, ~where } l \in L.
\]

\medskip

\begin{tabular}{ll}
\begin{minipage}[b]{0.5\linewidth}
\begin{alltt}\codefont
(fmod SPECIFICATION is including RULES .
  
  op 0   : -> PClosedTerm .
  op _._ : PLabel PTerm -> PTerm
           [metadata "sos"] .
  op _+_ : PTerm PTerm -> PTerm
           [metadata "sos"] .
  
  ops a b c : -> PAction .
  
  var x y x' y' : PTerm .
  var l : PLabel .
\end{alltt}
\end{minipage}
\begin{minipage}[b]{0.5\linewidth}
\begin{alltt}\codefont
  mb  
    ===
    l . x -(l)-> x

    x -(l)-> x'
    ===
    x + y -(l)-> x'
    
    y -(l)-> y'
    ===
    x + y -(l)-> y'  : PAllRules .
endfm)
\end{alltt}
\end{minipage}
\end{tabular}

\medskip

As illustrated, Maude provides good support for working with operators in infix notation.
As an improvement over {\PREG} \cite{AcetoCGI11a}, we use the generic label \code{l} of sort \code{PLabel} as syntactic sugar, instead of writing rules for each of the three concrete actions.
These actions are declared as constants, operations without a domain, of sort \code{PAction}, which is a subsort of \code{PLabel} described later starting with Section~\ref{sec:simbis}. (We can have other types of labels, not just actions.) The deadlock process is also declared as a constant of sort \code{PClosedTerm}, which stands for ${T}(\Sigma)$ and is declared internally as a subsort of \code{PTerm}.

\section{Meta SOS Components\label{sec:components}}

Though {\MetaSOS} is conceived as a general SOS framework, we have so far limited our development to case studies involving only GSOS systems \cite{Bloom95}.
These systems have certain desirable properties and, in spite of their restricted format, they cover most of the operations in the literature \cite{Aceto01}.

\begin{defn}[GSOS rule format]
\label{def:gsos}
Consider a process signature $\Sigma$. A \emph{GSOS rule} $\rho$ over $\Sigma$ has the shape:
\[
\sosrule{\{x_i \trans{l_{ij}} y_{ij}  \mid  i \in I, j \in I_{i}\} \cup \{x_{i} \ntrans{l_{ij}} \mid i \in J, j \in J_{i}\}}{f(\vec{x}) \xrightarrow{l} C[\vec{x},\vec{y}]}
,\]
where all variables are distinct, $f$ is an operation symbol from $\Sigma$ with arity $n$, $I,J \subseteq \{1, \ldots, n\}$, $I_{i}, J_{i}$ are finite index sets, the $l_{ij}$'s and $l$ are labels standing for actions ranging over the set $L$, and $C[\vec{x}, \vec{y}]$ is a $\Sigma$-context with variables including at most the $x_{i}$'s and $y_{ij}$'s.

A \emph{GSOS system} is a TSS $(\Sigma, L, {\cal R})$ such that $\Sigma$ and $L$ are finite, and ${\cal R}$ is a finite set of  rules in the GSOS format.
\end{defn}

The operational semantics of a TSS in the GSOS format is given in terms of a labelled transition system (LTS), whose transition relations are defined by structural induction over closed terms using the rules.
An essential property we make use of is that LTS's induced by GSOS systems are finitely branching \cite{Bloom95}. It is easy to see that BCCSP respects the GSOS format.

In what follows we present three main features provided by the {\MetaSOS} framework.

\subsection{Simulator and Bisimilarity Checker\label{sec:simbis}}

The purpose of the simulator associated to a TSS is to find all transitions for a given closed term.
Formally, the simulator finds, for a given closed term $p$, all the labels $l$ and closed terms $p'$ such that $p \trans{l} p'$. This is a slightly more general approach than the one in \cite{sos-maude}, where the user needs to give not only the initial term, but also the label as input.

To illustrate how to use the simulator, consider $\_\!\parallel\!\!\_$, the interleaving parallel composition without communication.
We want to also define its behaviour in the context of the termination predicate $\downarrow$.
As {\MetaSOS} does not currently provide direct support for working with predicates, unlike {\PREG}, we model predicate satisfiability by means of transitions. By adding the termination predicate trigger as a label, the rules for the prefix and choice operators remain the same. The rules for interleaving parallel composition are:

\[
\sosrule{x \trans{\alpha} x'}{x \parallel y \trans{\alpha} x' \parallel y}
\hspace{3ex}
\sosrule{y \trans{\alpha} y'}{x \parallel y \trans{\alpha} x \parallel y'}
\hspace{3ex}
\sosrule{x \trans{\downarrow} x ~~ y \trans{\downarrow} y'}{x \parallel y \trans{\downarrow} \nul}.
\]

Here $\alpha$ stands for any of the considered actions from the set $\{a,b,c\}$, but not for the termination predicate trigger. Also, we want to make sure that the last rule is applied only for $\downarrow$, but for none of the actions.
To specify this we enhance the previous specification with a new sort for predicates as a subset of labels, add the termination predicate, a variable ranging only over actions, and the rules:

\medskip

\noindent
\begin{tabular}{lll}
\begin{minipage}[b]{0.33\linewidth}
\begin{alltt}\codefont

sort PPredicate .
subsort PPredicate < PLabel .

x -(alpha)-> x'
===
x || y -(alpha)-> x' || y
\end{alltt}
\end{minipage}
\begin{minipage}[b]{0.33\linewidth}
\begin{alltt}\codefont
 op | : -> PPredicate .
 var alpha : PAction .
  
 y -(alpha)-> y'
 ===
 x || y -(alpha)-> x || y'
\end{alltt}
\end{minipage}
\begin{minipage}[b]{0.33\linewidth}
\begin{alltt}\codefont


x -(|)-> x' , y -(|)-> y'
===
x || y -(|)-> 0
\end{alltt}
\end{minipage}
\end{tabular}

\medskip

We first need to set up a simulator \code{(derive simulator SPECIFICATION .)}.
Not only does this prepare the metalanguage application to perform simulations, but also outputs a pure Maude specification that can be used outside the {\MetaSOS} environment for simulations within the specified system.
The advantage of using this generated simulator is a minor gain in performance due to the elimination of the overhead that comes with any metalanguage application.
In addition, this allows for the use of Maude tools such as the reachability analyzer and the LTL model checker. 

To perform a one step simulation for a given term we use the command \code{(simulate ...~.)}. For instance, the concrete call to observe how $p = \,\downarrow . \nul \parallel a . \nul$ is simulated is \code{(simulate | .~0 || a .~0~.)}. The output is a list of pairs of the shape $< l ~\#~ p' >$, where $l$ is a label of a provable transition and $p'$ is the resulting term. In our case the output is: \code{Possible steps:~< a \# | .~0 || 0 >}.
Note that due to our making a clear distinction between actions and predicates only one of the rules involving actions is applicable. The term $(\downarrow . \nul + b . \nul) \parallel (c . \nul + \downarrow . \nul)$, on the other hand, does involve all the specified rules:

\begin{alltt}\codefont
> (simulate | . 0 + b . 0 || c . 0 + | . 0 .)

Possible steps:
  < b \# 0 || c . 0 + | . 0 >
  < c \# | . 0 + b . 0 || 0 >
  < | \# 0 >
\end{alltt}

From the implementation perspective we tackled one of the issues suggested as future work in \cite{sos-maude}.
The caveat of the tool presented in that paper is that the user needs to provide term matching and substitution definitions by hand for every operator.
Our approach uses and extends Maude's meta-level functionality of working with substitutions in such a way that it becomes transparent to the user.

As the idea of rewrite-based SOS simulators has already been explored in \cite{MaudeMSOS,sos-maude,Verdejo2006226}, we focused only on performing one step simulations.
Having that functionality, it was natural to derive a strong bisimilarity checker that implements the following definition.

\begin{defn}[Strong Bisimilarity\ \cite{Park81}]\label{def:strongbisim}
Consider a TSS ${\cal T} = (\Sigma, L, {\cal R})$. A relation $R \subseteq T(\Sigma) \times T(\Sigma)$ is a \emph{strong bisimulation} if and only if it is symmetric and 
\[\forall_{p, q} (p,q) \in R \Rightarrow (\forall_{l,p'} p \trans{l} p' \Rightarrow \exists_{q'} q \trans{l} q' \land (q, q') \in R).\]
Two closed terms $p$ and $q$  are \emph{strongly bisimilar}, denoted by $p \bisimpar{{\cal T}} q$, if there exists a strong bisimulation relation $R$ such that $(p,q) \in R$. Whenever ${\cal T}$ is known from the context, we simply write $p \bisim q$.
\end{defn}

In \cite{Bloom95} it is shown that bisimilarity is a congruence for GSOS systems and that the LTS's defined using these systems are finitely branching.
These properties are necessary when checking for strong bisimilarity by means of the axiom schema that we will present in Section~\ref{sec:axiomderiver}.
Before that, let us present how to check strong bisimilarity using {\MetaSOS}.

In order to check if, for instance, $a . \nul \parallel b.\nul ~\bisim~ a . b . \nul + b . a . \nul$ holds, we use the command
\begin{alltt} \codefont
> (check (a . 0 || b . 0) \(\sim\) (a . b . 0 + b . a . 0) .)

result: true.Bool
\end{alltt}


The Maude specification output when setting up the simulator also includes the bisimilarity checker.
Running this specification allows one to directly use the functions that implement the simulator and bisimilarity checker features, which have the same name as the user interface commands.
These are called using \code{reduce} in the core Maude environment: \code{reduce simulate  ...~.} and \code{reduce check ...~$\sim$ ...~.}

The bisimilarity checker does not currently handle process terms with infinite behaviour. The module presented in the next section, however, can check if two terms from $\Sigma_{\BCCSP}$, defined using guarded recursion, are bisimilar.

\subsection{Axiom Schema Deriver \label{sec:axiomderiver}}

As an alternative method for reasoning about strong bisimilarity, {\MetaSOS} includes a component for generating axiom schemas that are sound and ground-complete modulo bisimilarity. 
There has been a notable amount of effort put into developing algorithms for axiomatizations for GSOS-like systems \cite{Aceto94, AcetoCGI2011, BaetenV04}, yet all involve several transformations of the original system before deriving the axioms.
After implementing one such algorithm in the tool {\PREG} \cite{AcetoCGI11a}, a simpler method was developed in \cite{gsos-data}. We slightly adapt that approach here by using an extended version of the prefix operation and by also showing how to axiomatize operations defined using rules with negative premises, not just positive ones.

When given a signature $\Sigma$ that includes $\Sigma_{\BCCSP}$, the purpose of an axiomatization of strong bisimilarity is to rewrite each term $t \in T(\Sigma)$, that is semantically well founded in the sense of Definition~5.1 from \cite{Aceto94}, to another term $t'$ such that $t \bisimpar{} t'$ and $t' \in T(\Sigma_{\BCCSP})$.
This reduces the problem of axiomatizing bisimilarity over $T(\Sigma)$ to that of axiomatizing it over ${\BCCSP}$.
It is well known \cite{DBLP:journals/jacm/HennessyM85} that the following axiomatization (denoted by $E_{\BCCSP}$) is sound and ground-complete for bisimilarity on ${\BCCSP}$:

\begin{center}
\begin{tabular}{r@{\hspace{3pt}}c@{\hspace{3pt}}lr@{\hspace{20pt}}r@{\hspace{3pt}}c@{\hspace{3pt}}lr}
$x + y$ & = & $y + x$ & ~ &
$x + x$ & = & $x$ & ~
\\[1ex]
$(x + y) + z$ & = & $x + (y + z)$ & ~ &
$x + \nul$ & = & $x$\, &
\end{tabular}
\end{center}

In order to set up Maude to perform equational reasoning using $E_{\BCCSP}$, we can declare that $\_+\!\_$ is associative and commutative so that rewrites are performed modulo these two properties: 
\code{op }{{\codefont\verb#_+_#}}\code{ :~PTerm PTerm -> PTerm [assoc comm metadata "sos"] .} Also, even though we could specify idempotence and identity element as attributes, for performance reasons we add the last two equations explicitly to the specification: \code{eq x + x = x .~eq x + 0 = x .}
For convenience {\MetaSOS} already includes a module with the signature and equations for BCCSP named \code{ET-BCCSP} that can be included in the specification.
For this reason, the names \code{.} and \code{+} are reserved, which means that if the user wants to specify his/her own version of the prefix and choice operations some other names need to be used.

\begin{defn}[{Head} Normal Form]
\label{def:hnf}
Let $\Sigma$ be a signature such that $\Sigma_{\BCCSP} \subseteq \Sigma$.
A term $t$ in $\mathbb{T}(\Sigma)$ is in \emph{head normal form} (for short, h.n.f.) if
\(
t = {\sum_{i \in I}l_i.t_{i}}.
\) The empty sum $(I = \emptyset)$ is denoted by the deadlock constant $\nul$.
\end{defn}

\begin{defn}[Disjoint extension]
\label{def:disjExt}
A {GSOS} system $G'$ is a disjoint extension of a {GSOS} system $G$, written $G \sqsubseteq G'$, if the signature and the rules of $G'$ include those of $G$, and $G'$ does not introduce new rules for operations in $G$.
\end{defn}

\begin{defn}[Axiomatization schema]\label{def:axiomatization} Let ${\cal T} = (\Sigma, L, {\cal R})$ be a TSS in GSOS format  such that $\textnormal{\BCCSP} \sqsubseteq {\cal T}$. By $E_{{\cal T}}$ we denote the axiom system that extends $E_{\BCCSP}$ with the following axiom schema for every operation $f$ in ${\cal T}$, parameterized over the vector of closed process terms $\vec{p}$ in h.n.f.:

\begin{center}
\noindent
$f(\vec{p}) = \sum \left\{l.C[\vec{p},\vec{q}] \,\,\Bigg|\,\, \rho = \sosrule{H}{f(\vec{x}) \xrightarrow{l} C[\vec{x},\vec{y}]} \in {\cal R}, \vec{p} = \sigma(\vec{x}), \vec{q} = \sigma(\vec{y}) \textnormal{ and } \checkmark(\vec{p}, \rho) \right\}$,
\end{center}

\noindent
\hspace{10ex} where $\checkmark$ is defined as
~~\(
\checkmark(\vec{p}, \rho) = \bigwedge_{p_{k} \in \vec{p}} ~\checkmark'(p_{k}, k, \rho),\)

\noindent
\hspace{10ex}
\(
\begin{array}{@{}l@{}l}
\text{and }~~
\checkmark' &\left( p_{k}, k, \sosrule{\{x_i \trans{l_{ij}} y_{ij}  \mid  i \in I, j \in I_{i}\} ~ \{x_{i} \ntrans{l_{ij}} \mid i \in J, j \in J_{i}\}}{f(\vec{x}) \xrightarrow{l} C[\vec{x},\vec{y}]} \right) =\\[3ex]
&~\textnormal{if } k \in I \textnormal{ then } \forall_{j \in J_{k}}~ \exists_{p', p'' \in T(\Sigma_{P})} ~ p_{k} \equiv l_{k j}.p' + p'' \textnormal{ and}\\
&~\textnormal{if } k \in J \textnormal{ then } \forall_{j \in J_{k}}~ \forall_{p', p'' \in T(\Sigma_{P})} ~ p_{k} \not\equiv l_{kj}.p' + p'',\\
&~\textnormal{where } \equiv \textnormal{ denotes equality up to } E_{\BCCSP}.
\end{array}
\)

\end{defn}

Intuitively, the axiom transforms $f(\vec{p})$ into a sum of closed terms covering all its execution possibilities. This is akin to Milner's well known expansion law for parallel composition of head normal forms. In order to obtain them we iterate through the set of $f$-defining rules and check if $\vec{p}$ satisfies their hypotheses by means of $\checkmark$. The predicate $\checkmark$ makes sure that, for a given rule, every component of $\vec{p}$ is a term with enough action prefixed summands satisfying all the positive premises that involve the component, and no summands prefixed with the actions from any of the corresponding negative premises.


\begin{thm}
\label{thm:axiomsoundcomplete}
Consider a TSS ${\cal T} = (\Sigma, L, {\cal R})$ that is semantically well founded in the sense of  \cite[Definition~5.1]{Aceto94}, such that ${\BCCSP} \sqsubseteq {\cal T}$. $E_{{\cal T}}$ is sound and ground-complete for strong bisimilarity on $T(\Sigma_{P})$.
\end{thm}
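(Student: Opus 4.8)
\noindent\emph{Proof plan.} The plan is to treat soundness and ground-completeness separately, in both cases reducing to the known facts about $E_{\BCCSP}$ on $\BCCSP$ \cite{DBLP:journals/jacm/HennessyM85} together with the congruence property and finite branching of GSOS \cite{Bloom95}. For soundness, since $\bisim$ is a congruence for GSOS systems and the equations of $E_{\BCCSP}$ are sound, it suffices to show that every instance of the new axiom schema relates bisimilar closed terms. Fix an operation $f$ and closed terms $\vec p$ in h.n.f., and write $r$ for the right-hand side $\sum\{l.C[\vec p,\vec q]\mid\dots\}$, which is a \emph{finite} sum (there are finitely many $f$-rules, and for each of them only finitely many substitutions $\sigma$ are compatible with the finite head normal forms $\vec p$). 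The key observation is an operational correspondence: $f(\vec p)\trans{l}s$ if and only if $s=C[\vec p,\vec q]$ for some rule $\rho$ and substitution $\sigma$ appearing in the comprehension. For the ``if'' part, $\checkmark(\vec p,\rho)$ guarantees that every positive premise $x_i\trans{l_{ij}}y_{ij}$ of $\rho$ can be met ($p_i$, being in h.n.f., has an $l_{ij}$-prefixed summand that serves as $\sigma(y_{ij})$) and every negative premise $x_i\ntrans{l_{ij}}$ holds ($\checkmark'$ forbids an $l_{ij}$-prefixed summand of $p_i$), so $\rho$ fires; the ``only if'' part reads the same correspondence backwards, using that a h.n.f.\ term $\sum_k m_k.u_k$ has exactly the transitions $\trans{m_k}u_k$. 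Consequently the identity relation on closed terms, together with the symmetric closure of $\{(f(\vec p),r)\}$, is a strong bisimulation, so $f(\vec p)\bisim r$.

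For ground-completeness I would first prove a \emph{head-normalisation lemma}: every closed term $t\in T(\Sigma_P)$ satisfies $E_{\cal T}\vdash t=t'$ for some $t'$ in h.n.f. This goes by structural induction on $t$: a prefix $l.s$ is already in h.n.f.; for $s_1+s_2$ one head-normalises the summands by the induction hypothesis and collects all the prefixed terms using associativity and commutativity; for $t=f(t_1,\dots,t_n)$ with $f\notin\Sigma_{\BCCSP}$ (including $0$-ary $f$) one first head-normalises the $t_i$ by the induction hypothesis and then applies one instance of the axiom schema, whose right-hand side is a sum of prefixed terms. This lemma needs no well-foundedness hypothesis.

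Next comes the \emph{elimination lemma}: every closed term $t$ satisfies $E_{\cal T}\vdash t=\hat t$ for some $\hat t\in T(\Sigma_{\BCCSP})$. Since ${\cal T}$ is semantically well founded and GSOS systems are finitely branching, the sub-LTS reachable from any closed term is finite and acyclic, so $t$ has a well-defined finite depth $d(t)$ (the length of a longest transition sequence out of $t$); $d$ strictly decreases along transitions and is invariant under $\bisim$. The lemma is proved by induction on $d(t)$: head-normalise $t$ to $\sum_{i\in I}l_i.t_i$, so that $t\bisim\sum_{i\in I}l_i.t_i$ by soundness; each $t_i$ is bisimilar to an $l_i$-successor of $t$ (again by the operational correspondence, applied along the head-normalisation), hence semantically well founded with $d(t_i)<d(t)$, so by the induction hypothesis $E_{\cal T}\vdash t_i=\hat t_i\in T(\Sigma_{\BCCSP})$, whence $E_{\cal T}\vdash t=\sum_{i\in I}l_i.\hat t_i\in T(\Sigma_{\BCCSP})$; the base case $I=\emptyset$ gives $t=\nul$. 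Finally, for closed $t,t'\in T(\Sigma_P)$ with $t\bisim t'$, the elimination lemma yields $\hat t,\hat t'\in T(\Sigma_{\BCCSP})$ with $E_{\cal T}\vdash t=\hat t$ and $E_{\cal T}\vdash t'=\hat t'$; soundness gives $\hat t\bisim t\bisim t'\bisim\hat t'$, so ground-completeness of $E_{\BCCSP}$ on $\BCCSP$ gives $E_{\BCCSP}\vdash\hat t=\hat t'$, hence $E_{\cal T}\vdash\hat t=\hat t'$ since $E_{\BCCSP}\subseteq E_{\cal T}$, and composing the three derivations gives $E_{\cal T}\vdash t=t'$.

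The step I expect to be the main obstacle is the elimination lemma --- specifically, extracting from Definition~5.1 of \cite{Aceto94} exactly what is needed (that every closed term has a finite, well-defined depth, inherited by transition successors and by bisimilar terms) and making sure the head normal form delivered by the axiom schema genuinely exposes the one-step successors, so that the induction measure actually drops. This is the point where the operational correspondence established for soundness has to be re-used inside the completeness argument; handling negative premises cleanly in that correspondence requires some care but causes no real difficulty, since GSOS systems always induce a well-defined transition relation \cite{Bloom95}.
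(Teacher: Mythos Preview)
Your proof is correct and follows the same approach as the paper: soundness via the operational correspondence between $f(\vec p)$ and the right-hand side of the schema, and ground-completeness via head normalization. The paper's own argument is much terser---it delegates the entire completeness reduction (that head normalization suffices for a semantically well-founded GSOS system) to \cite{Aceto94}, whereas you unpack that reduction explicitly through your elimination lemma and the final appeal to the ground-completeness of $E_{\BCCSP}$.
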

\begin{proof} Soundness follows in the standard fashion. Every transition $f(\vec{p})$ can perform is matched by the right hand side of the equation and vice versa due to the natural derivation of the execution tree according to the defining rules.

As shown in \cite{Aceto94}, in order to prove ground-completeness of an axiom system, it is sufficient to show that it is head normalizing, which means that it can bring any closed term to a h.n.f. Note that the axiomatization presented in Definition~\ref{def:axiomatization} always derives terms in h.n.f.
\end{proof}

In order to generate a Maude equational theory for the operations in a specification we use the command \code{(derive axiom schemas SPECIFICATION .)}.
Just like in the case of the simulator component, the command both prepares the environment to perform equational reductions according to the generated axioms, and outputs a Maude specification that can be used externally, independently of the environment.
The generated equational theory has the name of the specification with the suffix ``\code{-SCHEMA}'' and is selected using the command \code{(select SPECIFICATION-SCHEMA .)}.

The standard command \code{reduce} derives the normal form of a given closed term. For example, having loaded the specification of $\_\!\parallel\!\!\_$ from Section~\ref{sec:simbis}, this is how we obtain the normal form of $a . \nul \parallel b.\nul$:
\begin{alltt}\codefont
> (reduce a . 0 || b . 0 .)

result PClosedTerm :  a . b . 0 + b . a . 0
\end{alltt}

To illustrate what the axiomatizations look like, consider a general binary operation between labels \textit{mix} and an operation $g$ defined as:

\[
\sosrule{x \trans{k} x' ~~ y \trans{l} y' ~~ x \ntrans{l} ~~ y \ntrans{k}}{g(x,y) \trans{\textit{mix(k,l)}} x' + y'} ~~~~
\sosrule{x \trans{l} x' ~~ y \trans{l} y'}{g(x,y) \trans{\textit{l}} \nul}.
\]

The first equation derives a sum of new operations, $g_{1}$ and $g_{2}$, one for each rule defining $g$.
These new operations have the same domain as $g$, only extended with one parameter that will ultimately hold the tree of all execution paths that start with the corresponding rule -- its head normal form.
Initially this parameter is set to $\nul$.

\begin{alltt}\codefont
  eq g(x,y) = g1(x,y,0) + g2(x,y,0) .
\end{alltt}

Let us first present the axiom for $g_{1}$, which is given as a standard Maude conditional equation, and then discuss all of its aspects.

\begin{alltt}\codefont
 ceq g1(x,y,SOLUTION) = g1(x,y,SOLUTION')  --- (0)
  if k . x' + x1 := x + dummy              --- (1)
  /\tb l . y' + x2 := y + dummy              --- (2)
  /\tb not(x can l)                          --- (3)
  /\tb not(y can k)                          --- (4)
  /\tb NEW-SUMMAND := mix(k, l).(x' + y')    --- (5)
  /\tb SOLUTION' := SOLUTION + NEW-SUMMAND   --- (6)
  /\tb SOLUTION =/= SOLUTION'                --- (7)
  .
\end{alltt}

Condition (1) requires that the first parameter satisfies the formula $x \trans{k} x'$. The variable $x$ needs to be matched by a term that has $k.x'$ as a summand ($x_{1}$ is a generic variable of sort \code{PTerm}). If $x$ is exactly of the shape $k.\_$ then Maude cannot find a match between $k . x' + x_{1}$ and $x$, not knowing that it can assign $\nul$ to $x_{1}$, which explains the use of a constant of sort \code{PTerm} denoted by \code{dummy} added as a summand on the right hand side.

Condition (3) requires that $x$ satisfies the formula $x \ntrans{l}$. The natural inductive definition of the operation \textit{can} is included in the module \code{ET-BCCSP}:

\medskip
\begin{alltt}\codefont
  op _can_ :  PTerm PLabel -> Bool .

  eq 0 can l = false .                         eq (l . x) can l = true .
  eq (x + y) can l = (x can l) or (y can l) .  ceq (l . x) can k = false if l =/= k .
\end{alltt}
\medskip

If conditions (1)\,--\,(4) are satisfied, then the premises of $g$'s first rule are met. This means that $\textit{mix}(k, l).(x' + y')$, set at line (5) as the value for the variable \code{NEW-SUMMAND}, has to be a summand of the resulting head normal form. This head normal form is computed incrementally, by finding such summands individually, using the third parameter of $g_{1}$: \code{SOLUTION} and \code{SOLUTION'} hold the head normal forms computed before and, respectively after the current call of $g_{1}$.
The aforementioned summand is added only if it is not already part of \code{SOLUTION} (conditions (6)\,--\,(7)). Should all conditions hold, a recursive call of $g_{1}$ is initiated (line (0)).

An important fact to keep in mind is that, in the specification, for any given rule, the labels of negative transitions given as variables need to also appear on some of the positive ones. For instance, 
had we not had the premise $x \trans{k} x'$, where $k$ is a variable over the set of labels $L = \{a,b,c\}$, it would have been impossible to tell if condition (4) is met due to the missing assignment for $k$ that should have resulted when evaluating condition (1).
It is possible, however, to have a rule with negative premises labelled directly with constants, without the need for those constants to appear in other premises of the same rule.

If any of the conditions (1)\,--\,(4) does not hold, or if no new solution is found, then the following base-case equation is called:

\begin{alltt}\codefont
  eq g1(x,y,SOLUTION) = SOLUTION [owise] .
\end{alltt}

The equations for $g_{2}$ are generated in a similar fashion:
\begin{alltt}\codefont
  ceq g2(x,y,SOLUTION) = g2(x,y,SOLUTION')
    if l . x' + x1 := x + dummy     /\tb NEW-SOLUTION := l . 0
    /\tb l . y' + x2 := y + dummy     /\tb SOLUTION' := SOLUTION + NEW-SOLUTION
                                    /\tb SOLUTION =/= SOLUTION' .
   eq g2(x,y,SOLUTION) = SOLUTION [owise] .
\end{alltt}

In {\MetaSOS} one can also specify recursive processes.
If two such processes are given in {\BCCSP} with guarded recursion in order to determine whether they are bisimilar, the user can call a decision procedure implementing a unique fixed point induction algorithm. Currently the user needs to make sure that the guardedness condition is met.

By way of example, consider the following transition systems.
\[\xymatrix@R=.2cm@C=.6cm{
& & & & &  *+[o][F]{{q_3}}\ar@/^/[dl]_{o}\\
*+[o][F]{p_1}\ar@/^/[r]^{i} &
*+[o][F]{p_2}\ar@/^/[l]^{o}\ar@/^/[r]^{i} &
*+[o][F]{p_3}\ar@/^/[l]^{o} &
*+[o][F]{q_1}\ar[r]^{i} &
*+[o][F]{q_2}\ar@/^/[ur]^{i}\ar@/^/[dr]_{o} \\
& & & & &  *+[o][F]{{q_4}}\ar@/^/[ul]^{i}
}\]

We specify this behaviour in {\MetaSOS} by means of the reserved operation \code{def}:

\begin{alltt} \codefont
 ops p1 p2 p3 q1 q2 : -> PClosedTerm .  ops i o : -> PAction .
  
 eq def(p1) = i . p2 .           eq def(q1) = i . q2 .           eq def(q3) = o . q2 .
 eq def(p2) = i . p3 + o . p1 .  eq def(q2) = i . q3 + o . q4 .  eq def(q4) = i . q2 .
 eq def(p3) = o . p2 .
\end{alltt}

The command \code{(reduce areEqual(p1, q1) .)} checks whether $p_{1}$ and $q_{1}$ are bisimilar. The output in this case is the pair \code{< true ; < p1 ; q1 > < p1 ; q4 > < p2 ; q2 > < p3 ; q3 > >}, where the first element of the pair indicates whether the processes are bisimilar, and, if this is indeed the case, the second one is a representation of the found bisimulation.

\subsection{Commutativity Format Checker\label{sec:formatchecker}}

Besides automatically deriving sound and ground-complete axiomatizations, the focus of {\MetaSOS} is also to check for algebraic properties of operations, by design.
We have implemented a component that analyzes the provided SOS specification in order to find binary operations that are commutative.
We adapt the format for binary operations from \cite{Mousavi05-IPL} to GSOS systems that may have negative premises.

\begin{defn}[Commutativity]
Given a TSS and a binary process operator $f$ in its process signature, $f$ is called
{\em commutative w.r.t. a relation $\sim$}, if the following equation is sound w.r.t.\
$\sim$:
\[
f(x_0, x_1) = f(x_1, x_0).
\]
\end{defn}

\begin{defn}[\label{def::commform}Commutativity format \cite{axiom-algebraic,Mousavi05-IPL}]
A transition system specification over signature $\Sigma$ is in
\commform\ format with respect to a set of binary function symbols
$\COMM \subseteq \Sigma$ if all its $f$-defining transition rules with
$f \in \COMM$ have the following form
\[
\DedRule{c}~\sosrule{\{x_i \trans{l_{ij}} y_{ij}  \mid  i \in \{0,1\}, j \in I_{i}\} \cup \{x_{i} \ntrans{l_{ij}} \mid i \in \{0,1\}, j \in J_{i}\}}{f(x_{0}, x_1) \trans{l} t}
\]
where $I_{i}$ and $J_{i}$ are finite index sets for each $i \in \{0,1\}$, and variables
appearing in the source of the conclusion and target of the premises
are all pairwise distinct. We denote the set of premises of
\DedRule{c} by $H$. Moreover, for each such rule,
there exist a transition rule \DedRule{c'} of the following form in
the transition system specification
\[
\DedRule{c'}~\sosrule{H'}{f(x'_{0}, x'_{1}) \trans{l} t'}
\]
and a bijective mapping (substitution) $\hbar$ on variables such that (1)
 $\hbar(x'_0) = x_1$ and  $\hbar(x'_1) = x_0$,
(2)
 $\hbar(t') \simcom t$ and
(3)
 $\hbar(h') \in H$, for each ${h' \in H'}$.
Here $\simcom$ means equality up to swapping of arguments of operators in $\COMM$ in any context.
Transition rule \DedRule{c'} is called the {\em commutative mirror} of \DedRule{c}.
\end{defn}

\begin{thm}[Commutativity for \textsf{comm-form} \cite{axiom-algebraic,Mousavi05-IPL}]
\label{thm:commform}
If a transition system specification is in \commform\ format with respect to a set of operators $\COMM$, then all operators in $\COMM$ are commutative with respect to strong bisimilarity.
\end{thm}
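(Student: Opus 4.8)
\noindent\textit{Proof proposal.~} The plan is to exhibit a single strong bisimulation containing every closed instance of $f(x_0,x_1) = f(x_1,x_0)$ with $f \in \COMM$. Take
\[
R = \{ (p,q) \in T(\Sigma) \times T(\Sigma) \mid p \simcom q \},
\]
where $\simcom$ is read as the least congruence on terms that contains all pairs $(f(x_0,x_1), f(x_1,x_0))$ with $f \in \COMM$. Since argument-swapping is an involution and the congruence closure of a symmetric relation is symmetric, $R$ is symmetric. Because $f(p_0,p_1) \simcom f(p_1,p_0)$ for every $f \in \COMM$ and arbitrary closed $p_0,p_1$, once $R$ is shown to be a strong bisimulation we get $f(p_0,p_1) \bisim f(p_1,p_0)$ for all closed $p_0,p_1$, i.e.\ the equation $f(x_0,x_1)=f(x_1,x_0)$ is sound w.r.t.\ $\bisim$, which is precisely commutativity of every $f \in \COMM$.

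The heart of the matter is the transfer property: if $p \simcom q$ and $p \trans{l} p'$, then $q \trans{l} q'$ for some $q'$ with $p' \simcom q'$. First I would record a structural characterisation of $\simcom$ on terms: if $p \simcom q$ then $p$ and $q$ have the same head symbol $g$ of some arity $n$, say $p = g(p_1,\dots,p_n)$ and $q = g(q_1,\dots,q_n)$, and either $p_i \simcom q_i$ for all $i$, or ($g \in \COMM$, $n=2$ and) $p_1 \simcom q_2$ and $p_2 \simcom q_1$; this holds because the ``top-level permutation'' relating two $\simcom$-equal terms lies in the two-element group generated by the transposition and is closed under composition. The transfer property is then proved by induction on the depth of the derivation of $p \trans{l} p'$, with a case analysis on the last rule $\rho$ applied --- whose conclusion necessarily has source $g(\cdot,\dots,\cdot)$ --- and on which of the two alternatives above holds.

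In the \emph{aligned} case ($p_i \simcom q_i$), I reapply $\rho$ to $q$ via the substitution sending $x_i$ to $q_i$: each positive premise of $\rho$ is a transition of some $p_i$ established by a strictly shorter derivation, discharged against $q_i$ by the induction hypothesis, and since the resulting targets are $\simcom$-related, congruence of $\simcom$ gives that the instantiated target of $\rho$ for $q$ is $\simcom p'$. In the \emph{swapped} case ($g = f \in \COMM$, $p_1 \simcom q_2$, $p_2 \simcom q_1$), I invoke the commutative mirror $\DedRule{c'}$ of $\rho$ supplied by the \commform\ format together with its bijective variable renaming $\hbar$: by condition~(1) $\hbar$ exchanges $x'_0,x'_1$ with $x_1,x_0$, so instantiating $\DedRule{c'}$ along $\hbar$ and then the substitution sending $x_i$ to $q_i$ makes its premises correspond to those of $\rho$ (condition~(3): $\hbar(h') \in H$ for every premise $h'$ of $\DedRule{c'}$), which are again discharged by the induction hypothesis; condition~(2), $\hbar(t') \simcom t$, together with congruence of $\simcom$, yields that the obtained target is $\simcom$-related to $p'$. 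Bijectivity of $\hbar$ and the distinctness requirements on the variables in sources of conclusions and targets of premises make these two successive instantiations well defined and mutually consistent, exactly as in congruence proofs for \tyft/GSOS-like formats.

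The step I expect to be the main obstacle is discharging the \emph{negative} premises: to reuse $\rho$ (or its mirror) for $q$ I must turn ``$p_i$ cannot perform $l_{ij}$'' into ``$q_i$ cannot perform $l_{ij}$'', i.e.\ I need $\simcom$ to preserve inability to fire a label, which does not follow directly from the transfer property as stated since that property only transports transitions of the left-hand component. In the GSOS setting --- and, more generally, under the semantic well-foundedness hypothesis under which the format is meant to be applied --- this is resolved by strengthening the induction: one proves, by induction on term size (the negative premises of a GSOS rule test only the immediate arguments, which are strictly smaller), the symmetric statement that $\simcom$-related closed terms have, for every label $l$, $\simcom$-matching sets of $l$-derivatives; this simultaneously yields the transfer property and, by contraposition, the required preservation of $\ntrans{l}$. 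The soundness direction of the statement (that $R$-moves are matched in both directions) is then immediate from symmetry of $R$, completing the argument.
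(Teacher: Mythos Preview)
The paper does not give a proof of this theorem: it is stated with citations to \cite{axiom-algebraic,Mousavi05-IPL} and then the text moves directly on to the implementation. So there is no ``paper's own proof'' to compare against.

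That said, your proposal is precisely the standard argument from those references. Taking $R$ to be $\simcom$ itself and showing it is a strong bisimulation, with the case split on whether the top-level arguments are aligned or swapped and the use of the commutative mirror together with $\hbar$ in the swapped case, is exactly how the result is established in \cite{Mousavi05-IPL}. Your identification of negative premises as the delicate point and the fix via induction on term size (exploiting that in GSOS, and in the \commform\ rules of Definition~\ref{def::commform}, premises test only the immediate arguments $x_i$) is also the standard resolution. Two minor clean-ups: you might as well run the whole argument by induction on the size of $p$ rather than switching between derivation depth and term size, and you should note explicitly that $\simcom$ is closed under substitution so that from $\hbar(t') \simcom t$ you can pass to $\sigma(\hbar(t')) \simcom \sigma(t)$ when concluding $q' \simcom p'$.
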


We implement an algorithm that, for a given operation, searches for all of its rules that are commutative mirrors.
It is well know that parallel composition is commutative.
To check this using our tool we load the specification presented in Section~\ref{sec:simbis} and call \code{(check formats SPECIFICATION .)}.
The output shows that the first two rules defining $\_\!\parallel\!\!\_$ are commutative mirrors, and that the third rule involving the termination predicate $\downarrow$ is a commutative mirror of itself, by pointing out the bijective mapping:

\begin{alltt}\codefont
  _||_ is commutative:

  x -(alpha)-> x'                          y -(alpha)-> y'
  ===                          mirrors     ===
  x || y -(alpha)-> x' || y                x || y -(alpha)-> x || y'

  with:  alpha <- alpha  x' <- y'  x <- y  y' <- x'  y <- x

  x -(|)-> x' , y -(|)-> y'                x -(|)-> x' , y -(|)-> y'
  ===                          mirrors     ===
  x || y -(|)-> 0                          x || y -(|)-> 0

  with:  x' <- y'  x <- y  y' <- x'  y <- x
\end{alltt}

What {\MetaSOS} does internally is to generate a Maude theory that has the name of the specification with the suffix ``\code{-FORMATS}''.
It is the same as the initial specification, only that all the \code{"sos"} operators that are found commutative are enhanced with the attribute \code{comm}.
This is of use both when having to perform rewrites modulo commutativity involving those operations, and as meta-information for future components that may need it.
One of these components could, for instance, be dedicated to optimizing axiomatizations, using the approach presented in \cite{axiom-algebraic}.

An important thing to remark is that the label mapping \code{alpha <- alpha} appears amongst the process variables mapping. The reason we extend the mapping to labels too is the fact that the user should not be forced to use the same variable name for matching premises of different rules.
We would thus find that the first two rules are commutative mirrors even if they had different variables for actions, e.g. \code{alpha} and \code{beta}, respectively.

Aside from giving the user more freedom when choosing names for label variables, extending the mapping to labels is actually necessary for proving that some operators are commutative.
Consider, for example, the operation $g$ introduced in Section~\ref{sec:axiomderiver} and assume that the operation \textit{mix} over labels is declared as commutative.
Suppose label variables were not taken into account when searching for commutative mirrors. Then there would be no way of directly proving that $g$ is commutative, unless the user specified the 6 instantiations of the first rule for $g$, involving the concrete action labels $a,b,c$.

\begin{alltt}\codefont
> (check formats SPECIFICATION .)

g is commutative:

x-(k)->x',y-(l)->y',x-(l)/>,y-(k)/>             x-(k)->x',y-(l)->y',x-(l)/>,y-(k)/>
===                                   mirrors   ===
g(x,y) -(mix(k,l))-> x' + y'                    g(x,y) -(mix(k,l))-> x' + y'

with: k <- l  l <- k  x' <- y'  x <- y  y' <- x'  y <- x


x -(l)-> x' , y -(l)-> y'           x -(l)-> x' , y -(l)-> y'
===                        mirrors  ===
g(x,y) -(l)-> 0                     g(x,y) -(l)-> 0

with: l <- l  x' <- y'  x <- y  y' <- x'  y <- x
\end{alltt}

If we look at the first rule, note that when applying the substitution on labels, in order to check for the commutativity format, we need to make sure that \code{mix(k,l)} and \code{mix(l,k)} stand for the same label. This holds in our case because we do not merely check for syntactic equality, but for equality within the algebra defined for labels. Recall that we consider \textit{mix} to be commutative.

The first rule is found as a mirror of itself based on the commutativity of {\codefont \verb#_+_#}.
Had the consequent of the rule been of the shape \code{x' * y'} ({\codefont \verb#_*_#} being a new binary operation), {\MetaSOS} would have attempted to prove first that {\codefont \verb#_*_#} is commutative.

\subsection{Linda -- Integrating Components\label{sec:linda}}
\newcommand{\sdspace}{\textnormal{$t_{D}$}}
\newcommand{\nr}[1]{(#1) \hspace*{0.1cm}}
\newcommand{\Seq}{\mathrel{;}}

\newcommand{\ask}{\textit{ask}}
\newcommand{\nask}{\textit{nask}}
\newcommand{\tell}{\textit{tell}}
\newcommand{\get}{\textit{get}}

In this section we present another case study and show how easy it is to make use of the functionality provided by all the previously described components.
Let us focus on the tuple-space based coordination language Linda \cite{Brogi98} and its SOS semantics, as given in \cite{Mousavi05-IC}.

Consider a minimalistic signature for the data component, $\Sigma_{D}$, that consists of constants for tuples (typical members $u,v$) and two operations for working with multisets of tuples: $\emptyset$ for the empty multiset and $\_\_$ (blank) as a commutative and associative binary separator for the elements from the multiset. The operation
$\_\_$ has $\emptyset$ as identity element.
We prefer to use constructs instead of the standard mathematical ones (braces ``\{'', ``\}'' for set separators, commas ``,'' for separating elements within a set, and set union operator ``$\cup$'') for implementation purposes.
For instance, the multiset $\{u,v\} \cup \{u\} \cup \emptyset$ is written as  $u ~ v ~ u ~ \emptyset$ in ${\mathbb{T}}(\Sigma_{D})$, which is the same as $u ~ v ~ u$ because $\emptyset$ is the identity element.
(That is actually the standard Maude notation for sets and multisets.)
This is how we declare the above mentioned signature:

\begin{alltt}\codefont
  sort PData PClosedData .  subsort PClosedData < PData  .
  op  empty : -> PData .
  op     __ : PData PData -> PData [assoc comm id: empty] .
  ops   u v : -> PClosedData .
\end{alltt}

Linda has several constructs for manipulating a shared data component of the language:
\begin{itemize}
\item $\ask(u)$ and $\nask(u)$ check, respectively, whether tuple $u$ is (or is not) in the data space,
\item $\tell(u)$ adds tuple $u$ to the data space,
\item $\get(u)$ removes tuple $u$ from the data space.
\end{itemize}

The $\ask(u)$ and $\get(u)$ operations are blocking, in the sense that a process executing them blocks if $u$ is not in the data space.

In \cite{gsos-data} we show how to use labels for operating with the data component.
For Linda, the set of labels $L$ is extended to triples of the from $\langle\sdspace, -, \sdspace'\rangle$, where $\sdspace, \sdspace'$ are open data terms from ${\mathbb{T}}(\Sigma_{D})$, standing for the store before and, respectively, after the transition. The language does not have actions, hence the use of the placeholder ``$-$'' within the triple.
As shown later, in order to have a finite set of labels and rules, which is necessary to have a proper GSOS system, we use symbolic names instead of open data terms.

Besides the four constructs for operating with the store, the language includes the prefix operation $l.\_$ (for every $l$ in $L$), nondeterministic choice $\_+\!\_$, parallel composition $\_\!\parallel\!\!\_$, and sequential composition $\_\!\Seq\!\!\_$, all in the context of the already introduced termination predicate $\downarrow$. Linda also comes with a successfully terminated process, which we denote by $\downarrow . \nul$.

In order to handle the store, our approach of extending the prefix operation to triples is slightly different from the one in \cite{gsos-data}. Though less intuitive, it is easier to implement than the one involving two new operations, \textit{check} and \textit{update}, because it requires no extra core axioms aside from those in $E_{\BCCSP}$.

\begin{alltt}\codefont
  op <_,_,_> : PData PAction PData -> PLabel .  op - : -> PAction .
\end{alltt}

We first make sure that the SOS specification disjointly extends {\BCCSP}, as required by Theorem~\ref{thm:axiomsoundcomplete}. The rules for ${\Sigma_{\BCCSP}}$ are declared as presented in Section~\ref{sec:pre} because they are the same both for the extended labels and the termination predicate.

Given that $\mu$ is a variable to be replaced by any considered constant tuple, the rules for the operations manipulating the data component are:
\[
\begin{array}{c}
 \sosrule{}{\ask(\mu) \trans{\langle{x_{D} ~ \mu},-,{x_{D} ~ \mu}\rangle} {\downarrow.\nul}}
\qquad
 \sosrule{}{{\tell(\mu)} \trans{\langle{x_{D}},-,{x_{D} ~ \mu}\rangle} {\downarrow.\nul}}
\qquad
 \sosrule{}{{\get(\mu)} \trans{\langle{x_{D}  ~ \mu},-,{x_{D}}\rangle} {\downarrow.\nul}}.
\end{array}
\]

Linda also has a basic operation named {\nask} that checks if a tuple is not in the tuple space. The operation, however, is defined using side conditions, and currently we provide no support for such rules.

For the purpose of demonstration, we will only implement a limited and artificial version of Linda:

\begin{alltt}\codefont
ops ask tell get : PClosedData -> PTerm
                   [metadata "sos"] .      
op  d : -> PData .                         ===
var mu : PClosedData .                     ask(mu) -( <(d mu), -, (d mu)> )-> |.0
                                           
===                                        ===
tell(mu) -(<d, -, (d mu)>)-> |.0           get(mu) -( <(d mu), -, d> )-> |.0
\end{alltt}

The limitation consists in the use of a symbolic constant \code{d}, denoting a data term, instead of a variable of the same sort. This is because in \cite{gsos-data} it is presented how to derive a sound and ground-complete axiomatization modulo a notion of bisimilarity only for systems with a data component whose domain is a finite set of constants, and not a (possibly infinite) set of open terms.
In our case the domain of the data component can be thought of as a set of constants, limited to the number of tuples taken into account plus one (for the symbolic constant).
Using the constant \code{d} is also useful during the axiomatization process because it helps avoiding generating equations with fresh variables on the right hand side.
For instance, according to the schema from Definition~\ref{def:axiomatization}, the following axiom $\tell(\mu) = \langle d,-,d~\mu\rangle . \downarrow . \nul$ is generated, and here it is required that $d$ is not a variable.

The rules for $\_\!\parallel\!\!\_$ are very similar to those shown in Section~\ref{sec:simbis}, and those for $\_\!\Seq\!\!\_$ are:

\[
\begin{array}{c}
 \sosrule{{x} \trans{\langle{x_{D}},-,{x_{D}'}\rangle} {x'}}{{x \Seq y} \trans{\langle{x_{D}},-,{x_{D}'}\rangle} {x' \Seq y}}
\qquad
 \sosrule{{x} \trans{\downarrow} {x'} ~~~ {y} \trans{\langle{x_{D}},-,{x_{D}'}\rangle} {y'}}{{x \Seq y} \trans{\langle{x_{D}},-,{x_{D}'}\rangle} {y'}}
\qquad
 \sosrule{{x} \trans{\downarrow} {x'} ~~~ {y} \trans{\downarrow} {y'}}{{x \Seq y} \trans{\downarrow} {\nul}}
\end{array}.
\]

The rules for the last two operations do not introduce new names for data terms on the consequent transitions (all the names are known from the premises), which means that no axioms with fresh variables on the right hand side can be generated. Therefore it is safe to declare them using variables of sort \code{PData} instead of symbolic constants.

\begin{alltt}\codefont
op _;_ : PTerm PTerm -> PTerm [metadata "sos"] .   var xD xD' : PData .

x -(<xD,-,xD'>)-> x'        x -(|)-> x', y -(<xD,-,xD'>)-> y'  x -(|)-> x', y -(|)-> y'
===                         ===                                ===
x;y -(<xD,-,xD'>)-> (x';y)  x;y -(<xD,-,xD'>)-> y'             x;y -(|)-> y'
\end{alltt}

A use case scenario involving all the components illustrated so far may start with loading the specification for Linda and checking which operations are commutative \code{(check formats LINDA .)}. Remark that {\codefont \verb#_;_#}'s commutativity cannot be proven:

\begin{alltt}\codefont
Could not prove commutativity for:  _;_
Could not find commutative mirrors within:
x -(<xD,-,xD'>)-> x'        x -(|)-> x', y -(<xD,-,xD'>)-> y'  x -(|)-> x', y -(|)-> y'
===                         ===                                ===
x;y -(<xD,-,xD'>)-> (x';y)  x;y -(<xD,-,xD'>)-> y'             x;y -(|)-> y'
\end{alltt}

We could continue by deriving the axiom schema and determining the normal form of a term such as $\ask(u) \Seq \tell(v)$. Finally we can check if indeed the found normal form is bisimilar to the initial term.

\begin{alltt}\codefont
> (derive axiom schemas LINDA-FORMATS .)
> (select LINDA-FORMATS-SCHEMA .)
> (reduce ask(u) ; tell(v) .)
result PClosedTerm :  < d u,-,d u > . < d,-,d v > . | . 0
> (derive simulator LINDA-FORMATS .)
> (check (ask(u) ; tell(v)) \(\sim\) (< d u,-,d u > . < d,-,d v > . | . 0) .)
result: true.Bool
\end{alltt}
\subsection{Adding Components\label{sec:addingcomponents}}

{\MetaSOS} is conceived in a way to be easily extended with new components.
Besides the three components presented in Sections \ref{sec:simbis}, \ref{sec:axiomderiver}
and \ref{sec:formatchecker} the tool includes a file named \code{component-sample.maude} which the user can adapt to implement a new desired functionality by following the patterns presented in \cite{GoriacCLAG09}.

In what follows, the name ``sample'' is generic and is meant to be replaced by some other name suggesting the functionality of a new component.
Each component has two modules \code{SAMPLE-LANG-SIGN} and \code{SAMPLE-STATE-HANDLING}, dedicated for the signature of the implemented commands and, respectively, their semantics. Once implemented, the functionality is included in the {\MetaSOS} framework by following these steps:
(1) in the file \code{metasos-interface.maude} the signature for the new commands needs to be included in the \code{METASOS-LANG-SIGN} module and their semantics needs to be included in the module \code{METASOS-STATE-HANDLING},
(2) in the file \code{metasos.maude} the new component needs to be loaded just like the others: \code{load component-sample.maude}.

It is worth mentioning that, in order to ease the development cycle, the framework provides support for unit testing. It is beyond the scope of this paper, though, to present how to make use of this facility.
\section{Conclusion and Future Work\label{sec:conc}}

{\MetaSOS} addresses many of the extensions foreseen in \cite{sos-maude}. Namely, it represents a core framework dedicated to implementing SOS meta-theorems, it provides support for generating axiomatizations, and it frees the user from implementing matching procedures for specified language constructs.
In its present form, {\MetaSOS} can handle languages whose operational specification is in the GSOS format, such as most classic process calculi and Linda.
Another aspect addressed in \cite{sos-maude} is the support for more general SOS frameworks that allow for terms as labels, as well as multi-sorted and binding signatures.
This would allow the framework to handle name-passing and higher-order languages such as the $\pi$-calculus~\cite{SangiorgiW2001}.
Though {\MetaSOS} does not provide this kind of support yet, the general way in which it handles labels is a good step towards that goal.

There are, naturally, many ways to improve and extend the tool.
Besides checking for the commutativity format, there are many other formats to check for: determinism and idempotence \cite{AcetoBIMR12, AcetoGI13}, zero and unit elements \cite{AcetoCIMR11a}, associativity \cite{Mousavi08-CONCUR}, and distributivity \cite{AcetoCIMR11}.
Adapting {\PREG} and adding it as a component to {\MetaSOS} as presented in Section~\ref{sec:addingcomponents} would also be of value due to its different approach to generating axiomatizations, and because it includes a GSOS format checker.
The axiomatization process could be enhanced using the technique presented in \cite{axiom-algebraic}. This would lead to smaller and more natural axiom systems.
It would also be of interest to investigate axiomatizations that are sound and ground-complete modulo other notions of equivalence, such as weak bisimilarity.
From the theoretical viewpoint, it is worth investigating if the results on axiomatizations can be extended to coalgebras and whether a framework for SOS using the bialgebraic approach can be developed.


\bibliographystyle{eptcs}

\begin{thebibliography}{10}
\providecommand{\bibitemdeclare}[2]{}
\providecommand{\surnamestart}{}
\providecommand{\surnameend}{}
\providecommand{\urlprefix}{Available at }
\providecommand{\url}[1]{\texttt{#1}}
\providecommand{\href}[2]{\texttt{#2}}
\providecommand{\urlalt}[2]{\href{#1}{#2}}
\providecommand{\doi}[1]{doi:\urlalt{http://dx.doi.org/#1}{#1}}
\providecommand{\bibinfo}[2]{#2}

\bibitemdeclare{article}{AcetoBIMR12}
\bibitem{AcetoBIMR12}
\bibinfo{author}{Luca \surnamestart Aceto\surnameend}, \bibinfo{author}{Arnar
  \surnamestart Birgisson\surnameend}, \bibinfo{author}{Anna \surnamestart
  Ing{\'o}lfsd{\'o}ttir\surnameend}, \bibinfo{author}{Mohammad~Reza
  \surnamestart Mousavi\surnameend} \& \bibinfo{author}{Michel~A. \surnamestart
  Reniers\surnameend} (\bibinfo{year}{2012}): \emph{\bibinfo{title}{Rule
  formats for determinism and idempotence}}.
\newblock {\sl \bibinfo{journal}{Science of Computer Programming}}
  \bibinfo{volume}{77}(\bibinfo{number}{7--8}), pp. \bibinfo{pages}{889--907},
  \doi{10.1016/j.scico.2010.04.002}.

\bibitemdeclare{article}{Aceto94}
\bibitem{Aceto94}
\bibinfo{author}{Luca \surnamestart Aceto\surnameend}, \bibinfo{author}{Bard
  \surnamestart Bloom\surnameend} \& \bibinfo{author}{Frits~W. \surnamestart
  Vaandrager\surnameend} (\bibinfo{year}{1994}): \emph{\bibinfo{title}{Turning
  {SOS} rules into equations}}.
\newblock {\sl \bibinfo{journal}{Information and Computation}}
  \bibinfo{volume}{111}, pp. \bibinfo{pages}{1--52},
  \doi{10.1006/inco.1994.1040}.

\bibitemdeclare{inproceedings}{AcetoCGI2011}
\bibitem{AcetoCGI2011}
\bibinfo{author}{Luca \surnamestart Aceto\surnameend},
  \bibinfo{author}{Georgiana \surnamestart Caltais\surnameend},
  \bibinfo{author}{Eugen-Ioan \surnamestart Goriac\surnameend} \&
  \bibinfo{author}{Anna \surnamestart Ing{\'o}lfsd{\'o}ttir\surnameend}
  (\bibinfo{year}{2011}): \emph{\bibinfo{title}{Axiomatizing {GSOS} with
  Predicates}}.
\newblock In \bibinfo{editor}{Michel~A. \surnamestart Reniers\surnameend} \&
  \bibinfo{editor}{Pawel \surnamestart Sobocinski\surnameend}, editors: {\sl
  \bibinfo{booktitle}{Proceedings Eighth Workshop on Structural Operational
  Semantics 2011}}, {\sl \bibinfo{series}{EPTCS}}~\bibinfo{volume}{62}, pp.
  \bibinfo{pages}{1--15}, \doi{10.4204/EPTCS.62.1}.

\bibitemdeclare{inproceedings}{AcetoCGI11a}
\bibitem{AcetoCGI11a}
\bibinfo{author}{Luca \surnamestart Aceto\surnameend},
  \bibinfo{author}{Georgiana \surnamestart Caltais\surnameend},
  \bibinfo{author}{Eugen-Ioan \surnamestart Goriac\surnameend} \&
  \bibinfo{author}{Anna \surnamestart Ing{\'o}lfsd{\'o}ttir\surnameend}
  (\bibinfo{year}{2011}): \emph{\bibinfo{title}{{PREG Axiomatizer} - A Ground
  Bisimilarity Checker for {GSOS} with Predicates}}.
\newblock In \bibinfo{editor}{Andrea \surnamestart Corradini\surnameend},
  \bibinfo{editor}{Bartek \surnamestart Klin\surnameend} \&
  \bibinfo{editor}{Corina \surnamestart C\^{\i}rstea\surnameend}, editors: {\sl
  \bibinfo{booktitle}{Algebra and Coalgebra in Computer Science - 4th
  International Conference, CALCO 2011, Winchester, UK, August 30-September 2,
  2011. Proceedings}}, {\sl \bibinfo{series}{Lecture Notes in Computer
  Science}} \bibinfo{volume}{6859}, \bibinfo{publisher}{Springer}, pp.
  \bibinfo{pages}{378--385}, \doi{10.1007/978-3-642-22944-2\_27}.

\bibitemdeclare{inproceedings}{AcetoCIMR11}
\bibitem{AcetoCIMR11}
\bibinfo{author}{Luca \surnamestart Aceto\surnameend}, \bibinfo{author}{Matteo
  \surnamestart Cimini\surnameend}, \bibinfo{author}{Anna \surnamestart
  Ing{\'o}lfsd{\'o}ttir\surnameend}, \bibinfo{author}{Mohammad~Reza
  \surnamestart Mousavi\surnameend} \& \bibinfo{author}{Michel~A. \surnamestart
  Reniers\surnameend} (\bibinfo{year}{2011}): \emph{\bibinfo{title}{Rule
  Formats for Distributivity}}.
\newblock In \bibinfo{editor}{Adrian~Horia \surnamestart Dediu\surnameend},
  \bibinfo{editor}{Shunsuke \surnamestart Inenaga\surnameend} \&
  \bibinfo{editor}{Carlos \surnamestart Mart\'{\i}n-Vide\surnameend}, editors:
  {\sl \bibinfo{booktitle}{Language and Automata Theory and Applications - 5th
  International Conference, LATA 2011, Tarragona, Spain, May 26--31, 2011.
  Proceedings}}, {\sl \bibinfo{series}{Lecture Notes in Computer Science}}
  \bibinfo{volume}{6638}, \bibinfo{publisher}{Springer}, pp.
  \bibinfo{pages}{80--91}, \doi{10.1007/978-3-642-21254-3\_5}.

\bibitemdeclare{article}{AcetoCIMR11a}
\bibitem{AcetoCIMR11a}
\bibinfo{author}{Luca \surnamestart Aceto\surnameend}, \bibinfo{author}{Matteo
  \surnamestart Cimini\surnameend}, \bibinfo{author}{Anna \surnamestart
  Ing{\'o}lfsd{\'o}ttir\surnameend}, \bibinfo{author}{Mohammad~Reza
  \surnamestart Mousavi\surnameend} \& \bibinfo{author}{Michel~A. \surnamestart
  Reniers\surnameend} (\bibinfo{year}{2011}): \emph{\bibinfo{title}{{SOS} rule
  formats for zero and unit elements}}.
\newblock {\sl \bibinfo{journal}{Theoretical Computer Science}}
  \bibinfo{volume}{412}(\bibinfo{number}{28}), pp. \bibinfo{pages}{3045--3071},
  \doi{10.1016/j.tcs.2011.01.024}.

\bibitemdeclare{inproceedings}{Aceto01}
\bibitem{Aceto01}
\bibinfo{author}{Luca \surnamestart Aceto\surnameend}, \bibinfo{author}{Wan
  \surnamestart Fokkink\surnameend} \& \bibinfo{author}{Chris \surnamestart
  Verhoef\surnameend} (\bibinfo{year}{2001}): \emph{\bibinfo{title}{Structural
  Operational Semantics}}.
\newblock In \bibinfo{editor}{Jan~A. \surnamestart Bergstra\surnameend},
  \bibinfo{editor}{Alban \surnamestart Ponse\surnameend} \&
  \bibinfo{editor}{Scott~A. \surnamestart Smolka\surnameend}, editors: {\sl
  \bibinfo{booktitle}{Handbook of Process Algebra, {Chapter} 3}},
  \bibinfo{publisher}{Elsevier Science, Dordrecht, The Netherlands}, pp.
  \bibinfo{pages}{197--292}, \doi{10.1016/B978-044482830-9/50021-7}.

\bibitemdeclare{inproceedings}{AcetoGI13}
\bibitem{AcetoGI13}
\bibinfo{author}{Luca \surnamestart Aceto\surnameend},
  \bibinfo{author}{Eugen-Ioan \surnamestart Goriac\surnameend} \&
  \bibinfo{author}{Anna \surnamestart Ing{\'o}lfsd{\'o}ttir\surnameend}
  (\bibinfo{year}{2013}): \emph{\bibinfo{title}{SOS Rule Formats for Idempotent
  Terms and Idempotent Unary Operators}}.
\newblock In: {\sl \bibinfo{booktitle}{SOFSEM}}, pp. \bibinfo{pages}{108--120},
  \doi{10.1007/978-3-642-35843-2\_11}.

\bibitemdeclare{inproceedings}{axiom-algebraic}
\bibitem{axiom-algebraic}
\bibinfo{author}{Luca \surnamestart Aceto\surnameend},
  \bibinfo{author}{Eugen-Ioan \surnamestart Goriac\surnameend},
  \bibinfo{author}{Anna \surnamestart Ingolfsdottir\surnameend},
  \bibinfo{author}{Mohammad~Reza \surnamestart Mousavi\surnameend} \&
  \bibinfo{author}{Michel \surnamestart Reniers\surnameend}
  (\bibinfo{year}{2013}): \emph{\bibinfo{title}{Exploiting Algebraic Laws to
  Improve Mechanized Axiomatizations}}.
\newblock In: {\sl \bibinfo{booktitle}{Proceedings of the 5th Conference on
  Algebra and Coalgebra in Computer Science (CALCO 2013)}}, {\sl
  \bibinfo{series}{Lecture Notes in Computer Science}} \bibinfo{volume}{8089},
  \bibinfo{publisher}{Springer-Verlag, Berlin, Germany, 2013}.

\bibitemdeclare{article}{Mousavi09-BEATCS}
\bibitem{Mousavi09-BEATCS}
\bibinfo{author}{Luca \surnamestart Aceto\surnameend}, \bibinfo{author}{Anna
  \surnamestart Ing\'olfsd\'ottir\surnameend}, \bibinfo{author}{Mohammad~Reza
  \surnamestart Mousavi\surnameend} \& \bibinfo{author}{Michel~A. \surnamestart
  Reniers\surnameend} (\bibinfo{year}{2009}): \emph{\bibinfo{title}{Algebraic
  Properties for Free!}}
\newblock {\sl \bibinfo{journal}{Bulletin of the European Association for
  Theoretical Computer Science}} \bibinfo{volume}{99}, pp.
  \bibinfo{pages}{81--104}.

\bibitemdeclare{article}{BaetenV04}
\bibitem{BaetenV04}
\bibinfo{author}{Jos C.~M. \surnamestart Baeten\surnameend} \&
  \bibinfo{author}{Erik~P. \surnamestart de~Vink\surnameend}
  (\bibinfo{year}{2004}): \emph{\bibinfo{title}{Axiomatizing GSOS with
  termination}}.
\newblock {\sl \bibinfo{journal}{J. Log. Algebr. Program.}}
  \bibinfo{volume}{60-61}, pp. \bibinfo{pages}{323--351},
  \doi{10.1016/j.jlap.2004.03.001}.

\bibitemdeclare{article}{Bloom95}
\bibitem{Bloom95}
\bibinfo{author}{Bard \surnamestart Bloom\surnameend}, \bibinfo{author}{Sorin
  \surnamestart Istrail\surnameend} \& \bibinfo{author}{Albert~R. \surnamestart
  Meyer\surnameend} (\bibinfo{year}{1995}): \emph{\bibinfo{title}{Bisimulation
  can't be traced}}.
\newblock {\sl \bibinfo{journal}{J. ACM}} \bibinfo{volume}{42}, pp.
  \bibinfo{pages}{232--268}, \doi{10.1145/200836.200876}.

\bibitemdeclare{incollection}{Brinksma-1985a}
\bibitem{Brinksma-1985a}
\bibinfo{author}{Ed~\surnamestart Brinksma\surnameend} (\bibinfo{year}{1985}):
  \emph{\bibinfo{title}{A Tutorial on \textsc{Lotos}}}.
\newblock In \bibinfo{editor}{Michel \surnamestart Diaz\surnameend}, editor:
  {\sl \bibinfo{booktitle}{Proc.\ Protocol Specification, Testing and
  Verification V}}, \bibinfo{publisher}{North-Holland},
  \bibinfo{address}{Amsterdam, Netherlands}, pp. \bibinfo{pages}{171--194}.

\bibitemdeclare{article}{Brogi98}
\bibitem{Brogi98}
\bibinfo{author}{Antonio \surnamestart Brogi\surnameend} \&
  \bibinfo{author}{Jean-Marie \surnamestart Jacquet\surnameend}
  (\bibinfo{year}{1998}): \emph{\bibinfo{title}{On the Expressiveness of
  Linda-like Concurrent Languages}}.
\newblock {\sl \bibinfo{journal}{Electr. Notes Theor. Comput. Sci.}}
  \bibinfo{volume}{16}(\bibinfo{number}{2}), pp. \bibinfo{pages}{75--96},
  \doi{10.1016/S1571-0661(04)00118-5}.

\bibitemdeclare{article}{MaudeMSOS}
\bibitem{MaudeMSOS}
\bibinfo{author}{Fabricio \surnamestart Chalub\surnameend} \&
  \bibinfo{author}{Christiano \surnamestart Braga\surnameend}
  (\bibinfo{year}{2007}): \emph{\bibinfo{title}{{Maude MSOS Tool}}}.
\newblock {\sl \bibinfo{journal}{Electron. Notes Theor. Comput. Sci.}}
  \bibinfo{volume}{176}, pp. \bibinfo{pages}{133--146},
  \doi{10.1016/j.entcs.2007.06.012}.

\bibitemdeclare{proceedings}{DBLP:conf/maude/2007}
\bibitem{DBLP:conf/maude/2007}
\bibinfo{editor}{Manuel \surnamestart Clavel\surnameend},
  \bibinfo{editor}{Francisco \surnamestart Dur{\'a}n\surnameend},
  \bibinfo{editor}{Steven \surnamestart Eker\surnameend},
  \bibinfo{editor}{Patrick \surnamestart Lincoln\surnameend},
  \bibinfo{editor}{Narciso \surnamestart Mart\'{\i}-Oliet\surnameend},
  \bibinfo{editor}{Jos{\'e} \surnamestart Meseguer\surnameend} \&
  \bibinfo{editor}{Carolyn~L. \surnamestart Talcott\surnameend}, editors
  (\bibinfo{year}{2007}): \emph{\bibinfo{title}{All About Maude - A
  High-Performance Logical Framework, How to Specify, Program and Verify
  Systems in Rewriting Logic}}. {\sl \bibinfo{series}{Lecture Notes in Computer
  Science}} \bibinfo{volume}{4350}, \bibinfo{publisher}{Springer},
  \doi{10.1007/978-3-540-71999-1\_1}.

\bibitemdeclare{inproceedings}{Mousavi08-CONCUR}
\bibitem{Mousavi08-CONCUR}
\bibinfo{author}{Sjoerd \surnamestart Cranen\surnameend},
  \bibinfo{author}{Mohammad~Reza \surnamestart Mousavi\surnameend} \&
  \bibinfo{author}{Michel~A. \surnamestart Reniers\surnameend}
  (\bibinfo{year}{2008}): \emph{\bibinfo{title}{A Rule Format for
  Associativity}}.
\newblock In \bibinfo{editor}{Franck \surnamestart van Breugel\surnameend} \&
  \bibinfo{editor}{Marsha \surnamestart Chechik\surnameend}, editors: {\sl
  \bibinfo{booktitle}{Proceedings of the 19th International Conference on
  Concurrency Theory (CONCUR'08)}}, {\sl \bibinfo{series}{Lecture Notes in
  Computer Science}} \bibinfo{volume}{5201},
  \bibinfo{publisher}{Springer-Verlag}, pp. \bibinfo{pages}{447--461},
  \doi{10.1007/978-3-540-85361-9\_35}.

\bibitemdeclare{inproceedings}{gsos-data}
\bibitem{gsos-data}
\bibinfo{author}{Daniel \surnamestart Gebler\surnameend},
  \bibinfo{author}{Eugen-Ioan \surnamestart Goriac\surnameend} \&
  \bibinfo{author}{Mohammad~Reza \surnamestart Mousavi\surnameend}
  (\bibinfo{year}{2013}): \emph{\bibinfo{title}{Algebraic Meta-Theory of
  Processes with Data}}.
\newblock In: {\sl \bibinfo{booktitle}{Proceedings Tenth Workshop on Structural
  Operational Semantics 2013}}, \bibinfo{series}{Lecture Notes in Computer
  Science}, \bibinfo{publisher}{Springer-Verlag, Berlin, Germany, 2013}.
\newblock \bibinfo{note}{In the current volume}.

\bibitemdeclare{inproceedings}{Glabbeek01}
\bibitem{Glabbeek01}
\bibinfo{author}{R.J. \surnamestart van Glabbeek\surnameend}
  (\bibinfo{year}{2001}): \emph{\bibinfo{title}{The Linear Time - Branching
  Time Spectrum {I}. {T}he Semantics of Concrete, Sequential Processes}}.
\newblock In \bibinfo{editor}{A.~Ponse S.A. Smolka~J.A. \surnamestart
  Bergstra\surnameend}, editor: {\sl \bibinfo{booktitle}{Handbook of Process
  Algebra}}, \bibinfo{publisher}{Elsevier}, pp. \bibinfo{pages}{3--99},
  \doi{10.1007/3-540-57208-2\_6}.

\bibitemdeclare{article}{GoriacCLAG09}
\bibitem{GoriacCLAG09}
\bibinfo{author}{Eugen-Ioan \surnamestart Goriac\surnameend},
  \bibinfo{author}{Georgiana \surnamestart Caltais\surnameend},
  \bibinfo{author}{Dorel \surnamestart Lucanu\surnameend},
  \bibinfo{author}{Oana \surnamestart Andrei\surnameend} \&
  \bibinfo{author}{Gheorghe \surnamestart Grigoras\surnameend}
  (\bibinfo{year}{2009}): \emph{\bibinfo{title}{Patterns for Maude Metalanguage
  Applications}}.
\newblock {\sl \bibinfo{journal}{Electr. Notes Theor. Comput. Sci.}}
  \bibinfo{volume}{238}(\bibinfo{number}{3}), pp. \bibinfo{pages}{121--138},
  \doi{10.1016/j.entcs.2009.05.016}.

\bibitemdeclare{article}{GrooteV92}
\bibitem{GrooteV92}
\bibinfo{author}{Jan~Friso \surnamestart Groote\surnameend} \&
  \bibinfo{author}{Frits~W. \surnamestart Vaandrager\surnameend}
  (\bibinfo{year}{1992}): \emph{\bibinfo{title}{Structured Operational
  Semantics and Bisimulation as a Congruence}}.
\newblock {\sl \bibinfo{journal}{Information and Computation}}
  \bibinfo{volume}{100}(\bibinfo{number}{2}), pp. \bibinfo{pages}{202--260},
  \doi{10.1016/0890-5401(92)90013-6}.

\bibitemdeclare{article}{so55686}
\bibitem{so55686}
\bibinfo{author}{Pieter~H. \surnamestart {Hartel}\surnameend}
  (\bibinfo{year}{1999}): \emph{\bibinfo{title}{{LETOS} - a lightweight
  execution tool for operational semantics}}.
\newblock {\sl \bibinfo{journal}{Software: Practice and Experience}}
  \bibinfo{volume}{29}(\bibinfo{number}{15}), pp. \bibinfo{pages}{1379--1416},
  \doi{10.1002/(SICI)1097-024X(19991225)29:15\%3C1379::AID-SPE286\%3E3.0.CO;2-%
V}.

\bibitemdeclare{article}{DBLP:journals/jacm/HennessyM85}
\bibitem{DBLP:journals/jacm/HennessyM85}
\bibinfo{author}{Matthew \surnamestart Hennessy\surnameend} \&
  \bibinfo{author}{Robin \surnamestart Milner\surnameend}
  (\bibinfo{year}{1985}): \emph{\bibinfo{title}{Algebraic laws for
  nondeterminism and concurrency}}.
\newblock {\sl \bibinfo{journal}{J. ACM}}
  \bibinfo{volume}{32}(\bibinfo{number}{1}), pp. \bibinfo{pages}{137--161},
  \doi{10.1145/2455.2460}.

\bibitemdeclare{book}{Hoare85}
\bibitem{Hoare85}
\bibinfo{author}{C.~A.~R. \surnamestart Hoare\surnameend}
  (\bibinfo{year}{1985}): \emph{\bibinfo{title}{Communicating Sequential
  Processes}}.
\newblock \bibinfo{publisher}{Prentice Hall}.

\bibitemdeclare{article}{DBLP:journals/fmsd/Lin95}
\bibitem{DBLP:journals/fmsd/Lin95}
\bibinfo{author}{Huimin \surnamestart Lin\surnameend} (\bibinfo{year}{1995}):
  \emph{\bibinfo{title}{{PAM}: A Process Algebra Manipulator}}.
\newblock {\sl \bibinfo{journal}{Formal Methods in System Design}}
  \bibinfo{volume}{7}(\bibinfo{number}{3}), pp. \bibinfo{pages}{243--259},
  \doi{10.1007/BF01384078}.

\bibitemdeclare{book}{Milner89a}
\bibitem{Milner89a}
\bibinfo{author}{Robin \surnamestart Milner\surnameend} (\bibinfo{year}{1989}):
  \emph{\bibinfo{title}{Communication and Concurrency}}.
\newblock \bibinfo{publisher}{Prentice Hall}.

\bibitemdeclare{article}{sos-maude}
\bibitem{sos-maude}
\bibinfo{author}{Mohammad~Reza \surnamestart Mousavi\surnameend} \&
  \bibinfo{author}{Michel~A. \surnamestart Reniers\surnameend}
  (\bibinfo{year}{2006}): \emph{\bibinfo{title}{Prototyping {SOS} meta-theory
  in {Maude}}}.
\newblock {\sl \bibinfo{journal}{Electron. Notes Theor. Comput. Sci.}}
  \bibinfo{volume}{156}, pp. \bibinfo{pages}{135--150},
  \doi{10.1016/j.entcs.2005.09.030}.

\bibitemdeclare{article}{Mousavi05-IC}
\bibitem{Mousavi05-IC}
\bibinfo{author}{Mohammad~Reza \surnamestart Mousavi\surnameend},
  \bibinfo{author}{Michel~A. \surnamestart Reniers\surnameend} \&
  \bibinfo{author}{Jan~Friso \surnamestart Groote\surnameend}
  (\bibinfo{year}{2005}): \emph{\bibinfo{title}{Notions of Bisimulation and
  Congruence Formats for {SOS} with Data}}.
\newblock {\sl \bibinfo{journal}{Information and Computation}}
  \bibinfo{volume}{200}(\bibinfo{number}{1}), pp. \bibinfo{pages}{107--147},
  \doi{10.1016/j.ic.2005.03.002}.

\bibitemdeclare{article}{Mousavi05-IPL}
\bibitem{Mousavi05-IPL}
\bibinfo{author}{Mohammad~Reza \surnamestart Mousavi\surnameend},
  \bibinfo{author}{Michel~A. \surnamestart Reniers\surnameend} \&
  \bibinfo{author}{Jan~Friso \surnamestart Groote\surnameend}
  (\bibinfo{year}{2005}): \emph{\bibinfo{title}{A Syntactic Commutativity
  Format for {SOS}}}.
\newblock {\sl \bibinfo{journal}{Information Processing Letters}}
  \bibinfo{volume}{93}, pp. \bibinfo{pages}{217--223},
  \doi{10.1016/j.ipl.2004.11.007}.

\bibitemdeclare{inproceedings}{Park81}
\bibitem{Park81}
\bibinfo{author}{David Michael~Ritchie \surnamestart Park\surnameend}
  (\bibinfo{year}{1981}): \emph{\bibinfo{title}{Concurrency and Automata on
  Infinite Sequences}}.
\newblock In \bibinfo{editor}{Peter \surnamestart Deussen\surnameend}, editor:
  {\sl \bibinfo{booktitle}{Theoretical Computer Science}}, {\sl
  \bibinfo{series}{Lecture Notes in Computer Science}} \bibinfo{volume}{104},
  \bibinfo{publisher}{Springer}, pp. \bibinfo{pages}{167--183},
  \doi{10.1007/BFb0017309}.

\bibitemdeclare{techreport}{Plotkin81}
\bibitem{Plotkin81}
\bibinfo{author}{Gordon~D. \surnamestart Plotkin\surnameend}
  (\bibinfo{year}{1981}): \emph{\bibinfo{title}{A structural approach to
  operational semantics}}.
\newblock \bibinfo{type}{Technical Report} \bibinfo{number}{DAIMI FN-19},
  \bibinfo{institution}{Computer Science Department, Aarhus University},
  \bibinfo{address}{Aarhus, Denmark}.

\bibitemdeclare{book}{SangiorgiW2001}
\bibitem{SangiorgiW2001}
\bibinfo{author}{Davide \surnamestart Sangiorgi\surnameend} \&
  \bibinfo{author}{David \surnamestart Walker\surnameend}
  (\bibinfo{year}{2001}): \emph{\bibinfo{title}{The {$\pi$}-Calculus: A Theory
  of Mobile Processes}}.
\newblock \bibinfo{publisher}{Cambridge University Press},
  \bibinfo{address}{Cambridge}.
\newblock \bibinfo{note}{With a foreword by Robin Milner}.

\bibitemdeclare{article}{SerbanutaRM09}
\bibitem{SerbanutaRM09}
\bibinfo{author}{Traian-Florin \surnamestart Serbanuta\surnameend},
  \bibinfo{author}{Grigore \surnamestart Rosu\surnameend} \&
  \bibinfo{author}{Jos{\'e} \surnamestart Meseguer\surnameend}
  (\bibinfo{year}{2009}): \emph{\bibinfo{title}{A rewriting logic approach to
  operational semantics}}.
\newblock {\sl \bibinfo{journal}{Information and Computation}}
  \bibinfo{volume}{207}(\bibinfo{number}{2}), pp. \bibinfo{pages}{305--340},
  \doi{10.1016/j.ic.2008.03.026}.

\bibitemdeclare{article}{Verdejo2006226}
\bibitem{Verdejo2006226}
\bibinfo{author}{Alberto \surnamestart Verdejo\surnameend} \&
  \bibinfo{author}{Narciso \surnamestart Mart{\'\i}-Oliet\surnameend}
  (\bibinfo{year}{2006}): \emph{\bibinfo{title}{Executable structural
  operational semantics in Maude}}.
\newblock {\sl \bibinfo{journal}{The Journal of Logic and Algebraic
  Programming}} \bibinfo{volume}{67}(\bibinfo{number}{1Ð2}), pp.
  \bibinfo{pages}{226 -- 293}, \doi{10.1016/j.jlap.2005.09.008}.

\end{thebibliography}

\end{document}